\DeclareMathOperator{\Tr}{Tr}
\newtheorem{prop}{Proposition}
\newenvironment{equation-aligned}{\begin{equation}\begin{aligned}}{\end{aligned}\end{equation}}
\begin{document}
\title{Universal witnesses of vanishing energy gap}
\author{Konrad Szyma\'nski$^1$ and Karol \.Zyczkowski$^{1,2}$}

\address{$^1$ Institute of Theoretical Physics,
    Uniwersytet Jagiello\'{n}ski, 30-348 Krak{\'o}w, Poland}
\address{$^2$ Centrum Fizyki Teoretycznej PAN, 02-668 Warszawa, Poland}		

\date{\today}

\begin{abstract}

Energy gap, the difference between the energy of the ground state of a given Hamiltonian and the energy of its first excited state, is a parameter of a critical importance in analysis of phase transitions and adiabatic quantum computation. We present a concrete technique to determine the upper bound for the energy gap of a Hamiltonian $H_0$ based on properties of the set of expectation values of $H_0$ and an additional auxiliary Hamiltonian $V$. This formalism can be applied to obtain an effective
 criterion of gaplessness, which we illustrate with a concrete example of 
the XY model  -- a physical system with vanishing energy gap.
\end{abstract}
\maketitle

\section{Introduction}
The connection between geometry and quantum phase transitions has been studied extensively with various approaches, which include  geometric phase formalism \cite{berry1984quantal}
and studies of properties of the entropy of low-energy states  \cite{masanes2009area}.
 One of the geometry-related methods is based on the notion of {\sl numerical range} 
 -- the set of joint expectation values of several observables among the same quantum state.
 It has been applied in the analysis of uncertainty relations
 \cite{schwonnek2017state,szymanski2019geometric,sehrawat2019uncertainty}, 
 detection of quantum entanglement \cite{czartowski2019separability}, 
 and generalized Wigner functions \cite{schwonnek2020wigner}. 

The link between  the numerical range and quantum phase transition is of special interest, as it connects the static properties of the  set of quantum states (being an image of a linear map) and the dynamical properties of a Hamiltonian family depending on a single parameter, $H_\lambda=H_0+\lambda V$. 
One of the properties with important implications in quantum computing \cite{bachmann2017adiabatic}, state distinguishing, quantum speed limits \cite{marvian2015quantum}, and properties of ground states \cite{hastings2006spectral,bachmann2012automorphic, bravyi2010topological} 
is the value of the energy gap: the energy difference between the ground state and 
the first excited state. There are various ways of estimating the energy gap
 for various kinds of systems: finite-size criteria \cite{knabe1988energy,gosset2016local,lemm2020existence,lemm2019spectral} 
 take into account scaling of the energy gap with dimension of the system
 at finite size  to estimate its asymptotic value. 
 Martingale methods \cite{nachtergaele1996spectral} use 
 other properties of Hamiltonians of a finite size. Using imaginary time propagation \cite{jones2019variational} and methods based on density matrix renormalization group \cite{chepiga2017excitation} the gap can be estimated using numerical methods.
  There exist also techniques  to determine bounds for the  energy gap
 applicable for a certain classes of Hamiltonians
  \cite{levine2017gap,movassagh2017generic,affleck1988valence,lieb1961two,else2020topological}.

In this work, we explore the connection between phase transitions in systems described by a parameterized Hamiltonian $H_\lambda$ and numerical ranges further and provide a new link between the two: the vanishing energy gap is related to the geometrical features of the numerical range of operators $H_0$ and $V$. Hence the geometical properties
of the analyzed sets on the plane is sufficient to determine the gaplessness for certain classes 
of Hamiltonians. We illustrate this connection using a well-studied gapless system: 
a chain of interacting spins forming the XY model \cite{lieb1961two, chen2015discontinuity}.

\section{Preliminaries: joint numerical range}

The methods presented here rely on the properties of low-dimensional projections of the set of density matrices of fixed dimension $d$ (up to affine transformations). Let us recall the definition of the \emph{(joint) numerical range} of $k$ Hermitian operators $A_1, \ldots, A_k$ of order $d$: the numerical range $W(A_1, \ldots, A_k)$ is the set of {simultaneously allowed expectation values} taken over all mixed states \cite{GR12,DGHMPZ11}
\begin{equation}
W(A_1, \ldots, A_k) = \{ (\Tr \rho A_1, \ldots, \Tr \rho A_k) : \rho \in \mathcal{M}_d\},
 \end{equation}

where $\mathcal{M}_d$ denotes the set of density operators of size $d$.

The object $W$ defined above is  a convex subset of $\mathbb{R}^k$, 
the boundary of which is formed by the images of ground states of combinations of $A_1, \ldots, A_k$ --
for an example with $k=2$ see Fig. \ref{ground_state}.
The latter property is useful in the analysis of quantum phase transitions happening at zero temperature \cite{spitkovsky2018signatures,chen2017joint,chen2017physical}.

\section{Boundary of numerical range and quantum phase transitions}
Boundary $\partial W$ of the numerical range $W$
 contains information related to the ground states of Hamiltonians built as combinations of the input operators. This is a direct consequence of the fact that points on the boundary of a convex set are maximizers of linear functionals, formally described in the following statement.
\begin{prop}
\label{prop:groundboundary}
 Let us denote the boundary of the numerical range of a collection of $k$ 
 Hermitian operators $W(A_1, \ldots, A_k)$ by $\partial W$. The point $\vec p=(p_1,\ldots,p_k)$ in $\partial W$ with the inward-pointing normal vector $\vec n=(n_1,\ldots,n_k)$, is an image of the ground state of the Hamiltonian $\sum_{i=1}^k n_i A_i$.
\end{prop}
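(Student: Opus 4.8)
The plan is to combine two standard facts: the characterization of boundary points of a convex set by supporting hyperplanes, and the variational principle identifying the ground-state energy of a Hermitian operator with the minimum of its expectation value over all states. First I would record that $W$ is convex, being the image of the convex set $\mathcal{M}_d$ under the linear map $\rho \mapsto (\Tr \rho A_1, \ldots, \Tr \rho A_k)$, as already noted in the preliminaries. Hence at the boundary point $\vec p$ there is a supporting hyperplane, and the statement that $\vec n$ is the inward-pointing normal at $\vec p$ means precisely that $\vec n \cdot \vec q \ge \vec n \cdot \vec p$ for every $\vec q \in W$, with equality attained at $\vec q = \vec p$.

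Next I would translate this inequality back to the level of density operators. Setting $H := \sum_{i=1}^k n_i A_i$ and using linearity of the trace, any $\rho \in \mathcal{M}_d$ with image $\vec q = (\Tr \rho A_1, \ldots, \Tr \rho A_k)$ satisfies $\vec n \cdot \vec q = \sum_{i=1}^k n_i \Tr(\rho A_i) = \Tr(\rho H)$. So the supporting-hyperplane inequality says exactly that a state $\rho_p$ mapping to $\vec p$ minimizes $\Tr(\rho H)$ over all $\rho \in \mathcal{M}_d$.

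Then I would invoke the variational principle: for Hermitian $H$ with smallest eigenvalue $\lambda_{\min}$ one has $\Tr(\rho H) \ge \lambda_{\min}$ for every density operator $\rho$, with equality if and only if $\rho$ is supported on the ground eigenspace of $H$ (diagonalize $H$ and expand $\Tr(\rho H)$ in its eigenbasis, noting the spectral weights are nonnegative and sum to one). Combining this with the previous step, $\rho_p$ attains the minimum, hence $\rho_p$ is a ground state of $H = \sum_{i=1}^k n_i A_i$, and $\vec p$ is its image — which is the claim.

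The only point requiring care is bookkeeping about uniqueness: the preimage of $\vec p$ under the linear map is in general an entire face of $\mathcal{M}_d$, and the ground space of $H$ may be degenerate, so "the ground state" must be read as "a ground state" — any $\rho$ in that preimage is a valid witness, and conversely any ground-state density operator with image $\vec p$ serves. If $W$ fails to be full-dimensional in $\mathbb{R}^k$ one should also take $\vec n$ to be an honest supporting normal rather than merely a vector orthogonal to the affine hull of $W$, but this does not affect the argument. I do not anticipate a genuine obstacle beyond making these conventions explicit.
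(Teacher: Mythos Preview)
Your proposal is correct and follows essentially the same route as the paper: identify the inward normal at a boundary point with the minimizing direction of a linear functional on $W$, rewrite that functional as $\Tr(\rho H)$ with $H=\sum_i n_i A_i$, and conclude via the variational principle that the minimizer is a ground state. Your version is simply more explicit about the variational step and adds the sensible caveats on degeneracy and on $W$ possibly failing to be full-dimensional, which the paper glosses over.
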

\begin{proof}
 A point $\vec p$~~having an (inward-pointing) normal vector $\vec n$ on the boundary $\partial W$ of a convex set indicates that it minimizes the functional $\vec n \cdot \vec a$ among points $\vec a \in W$. Since $\vec a$ is a vector of expectation values, this is equivalent to the minimization of $\sum_{i=1}^k n_i \left< A_i \right>$, an expression equal to $\left< \sum_{i=1}^k n_i A_i \right>$. Expectation value is minimized on the ground state of $\sum_{i=1}^k n_i A_i$, which proves the proposition.\end{proof} 
 
\begin{figure}[!h]
\begin{center}
\includegraphics[width=\linewidth]{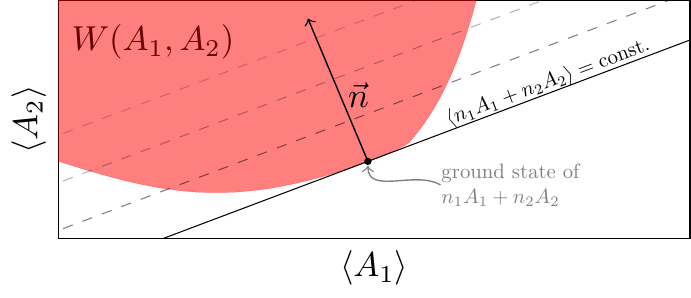}
\end{center}
\caption{Boundary $\partial W$ of a numerical range $W(A_1, A_2)$ is an image of the ground states of linear combinations of $A_1$ and $A_2$.}
\label{ground_state}
\end{figure}

Some properties of the ground states of Hamiltonians composed of $k$ terms, $H=\sum_{i=1}^k n_i H_i$, are visible in the boundary of the numerical range $W(H_1, \ldots, H_k)$. Flat parts of 
the boundary $\partial W$
indicate the degeneracy of the ground state of the Hamiltonian corresponding to the appropriate normal vectors. Cusps, having multiple normal vectors, indicate that the ground state stays constant over a range of parameters \cite{spitkovsky2018signatures}. Let us state here  a simplified version of this result restricted to $k=2$ terms. In such a case, the joint numerical range of two Hermitian observables is
equivalent to the numerical range of a single non-hermitian operator,
  $W(A,B)=W(A+iB)$.
\begin{prop}
\label{prop:cusp}
Consider a numerical range $W(A,B)$ with a cusp, the preimage of which is denoted by $\ket{g}$. Then, the state $\ket{g}$ is a simultaneous eigenvector of $A$ and $B$.
\end{prop}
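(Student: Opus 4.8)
The plan is to convert the geometric hypothesis (presence of a cusp) into an algebraic statement using Proposition~\ref{prop:groundboundary}, and then to perform a $2\times2$ change of basis to pass from linear combinations of $A$ and $B$ back to $A$ and $B$ themselves. First I would pin down the notion of a cusp: it is a boundary point $\vec p\in\partial W$ whose set of inward-pointing normal directions is not a single ray but a genuine angular sector, of opening angle in $(0,\pi)$. Choosing two distinct rays inside this sector yields inward normals $\vec n^{(1)}=(a_1,b_1)$ and $\vec n^{(2)}=(a_2,b_2)$ that are linearly independent as vectors in $\mathbb{R}^2$.

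Next I would invoke Proposition~\ref{prop:groundboundary} twice: the cusp point $\vec p$ is the image of the ground state of $H^{(j)}:=a_jA+b_jB$ for $j=1,2$. Since by hypothesis $\vec p$ is the image of the state $\ket g$, we get $\bra g H^{(j)}\ket g=E_0(H^{(j)})$, the minimal eigenvalue of $H^{(j)}$. By the variational principle, attaining this minimal value forces $\ket g$ to lie in the ground eigenspace of $H^{(j)}$, so $H^{(j)}\ket g=E_0(H^{(j)})\ket g$; hence $\ket g$ is a common eigenvector of $H^{(1)}$ and $H^{(2)}$.

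Finally I would observe that the matrix $\begin{pmatrix} a_1 & b_1\\ a_2 & b_2\end{pmatrix}$ is invertible precisely because $\vec n^{(1)}$ and $\vec n^{(2)}$ are linearly independent, so $A$ and $B$ are real linear combinations of $H^{(1)}$ and $H^{(2)}$. Any vector that is simultaneously an eigenvector of $H^{(1)}$ and $H^{(2)}$ is then an eigenvector of every linear combination of the two, and in particular of $A$ and of $B$ separately — which is exactly the claimed statement.

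I expect the only delicate point to be the first step: fixing a definition of "cusp" that genuinely guarantees two linearly independent supporting normals, thereby excluding a merely non-differentiable boundary point whose normal set is still one-dimensional, as well as the flat-edge case (which would instead signal degeneracy of the ground state rather than a simultaneous eigenvector). Everything after that is a short chain consisting of the variational principle and elementary linear algebra, with no further obstacle.
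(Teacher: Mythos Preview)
Your proposal is correct and follows essentially the same route as the paper's proof: two distinct supporting normals at the cusp give, via Proposition~\ref{prop:groundboundary}, two eigenvalue equations for $\ket g$ in linear combinations of $A$ and $B$, and inverting the $2\times2$ coefficient matrix yields the eigenvector property for $A$ and $B$ separately. If anything, you are slightly more careful than the paper in insisting that the two normals be linearly independent (the paper only asks for $\vec n\neq\vec m$, which in principle does not exclude the antiparallel case) and in explicitly invoking the variational principle to pass from ``image equals ground-state energy'' to ``$\ket g$ lies in the ground eigenspace.''
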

\begin{proof}
 If a point $(\braket{A}_g,\braket{B}_g) \in \partial W(X,Y)$ forms a cusp of $W$, it corresponds to multiple different support lines with different (normalized) normal vectors. Let us denote two of them by $\vec n$ and $\vec m$. Then by Proposition~\ref{prop:groundboundary}, the state $\ket{g}$ is the ground state of $\vec n \cdot (A,B)$ and of $\vec m \cdot (A,B)$,
\begin{equation-aligned}
n_0 A \ket{g} + n_1 B \ket{g} &= E_{\vec n} \ket{g},\\
m_0 A \ket{g} + m_1 B \ket{g} &= E_{\vec m} \ket{g}.\\
\end{equation-aligned}
\noindent
Since the matrix $\begin{pmatrix} n_0 & n_1 \\ m_0 & m_1 \end{pmatrix}$ is nonsingular if $\vec n \neq \vec m$, this system of equations can be diagonalized to form
\begin{equation-aligned}
A \ket{g} &= \lambda_A \ket{g},\\
B \ket{g} &= \lambda_B \ket{g},\\
\end{equation-aligned}
\noindent
which shows that $\ket{g}$ is an eigenvector of both $A$ and $B$.
\end{proof}
 In the case of a 2-D numerical range $W(A,B)$, there exists a
 relation \cite{caston2001eigenvalues} linking the curvature of the  boundary $\partial W$ 
 with the spectral properties of the nonhermitian operator $A+i B$.

The properties of the numerical range are not only of mathematical interest: convexity implies bounds for the properties of the ground state space of mixed Hamiltonians, which serves as a basis for semidefinite programming (which uses spectrahedra, objects dual to joint numerical ranges \cite{bengtsson2013geometry}), widely used in quantum information theory \cite{schwonnek2017state,szymanski2019geometric}. Sampling of the boundary of the numerical range $W$ has been realized experimentally for three-level quantum systems~\cite{xie2019observing}, confirming the classification established in~\cite{szymanski2018classification}.

\section{Gaplessness witnesses}

Energy gap $\Delta H$
 is the asymptotic difference between the energy $E_0$ of the ground state
 and the  energy $E_1$ of the first excited state, excluding the potential degeneracy of the former.
 For a sequence of Hamiltonians  $H_n$,
  describing systems of increasing size, the system is said to be \emph{gapped} if the value
\begin{equation}
\Delta H=\limsup_{n\rightarrow\infty}\Bigl[  E_1(H_n)-E_0(H_n) \Bigr]
\end{equation}
is nonzero. This expression determines the optimal rate of computation in adiabatic quantum computers \cite{van2001powerful}.
If the gap vanishes, $\Delta(H)=0$, the Hamiltonian $H$ is said to be \emph{gapless}.
This nontrivial
property \cite{cubitt2015undecidability}  
is a signature of a quantum phase transition taking place \cite{zanardi2006ground}. 
We provide here an efficient method to determine whether an arbitrary Hamiltonian $H$ is gapless with the help of a Hermitian perturbation operator $V$, 
which we call a \emph{gaplessness witness} of $H$.

\begin{figure}
\centering
\hspace{6mm}\includegraphics[width=.82\linewidth]{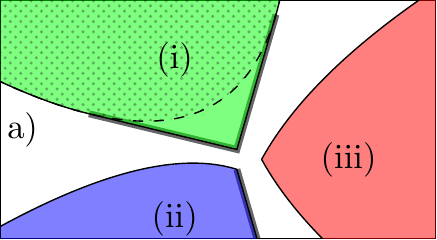}
\includegraphics[width=.9\linewidth]{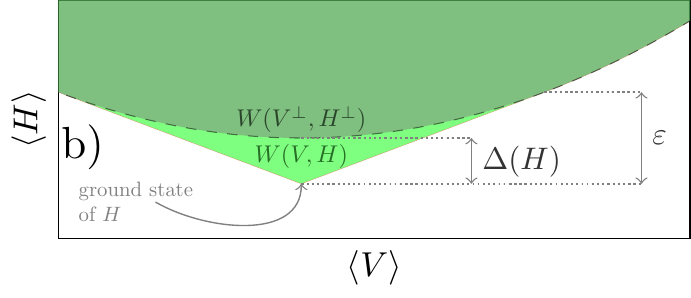}
\caption{a) Different kinds of cusps that could arise in the boundary of the 2D numerical range.  (i) regular cusp with two facets, which can be described as a convex hull of the apex and numerical range restricted to the orthogonal subspace (dotted area); (ii) cusp with one facet, (iii) cusp with no facets. Types (ii) and (iii) can occur only in infinite-dimensional systems. Facets are denoted by thick transparent line. \\
 b) Existence of cusps proves that parts $H, V$ of the total Hamiltonian share a common eigenstate. Geometric features of the segments connected to the cusp determine a bound for the energy gap; if no segments are present -- see case (iii) --
 the Hamiltonian $H$ is gapless.}
\label{fig:kinds}
\end{figure}

\begin{prop}
\label{prop:convhull}
Consider a numerical range $W(H,V)$ with a cusp on its boundary, the preimage of which is the state $\ket{g}$. Let us denote the operators restricted to the one-dimensional subspace spanned by $\ket{g}$ by $H^g$ and $V^g$. Let $H^\perp, V^\perp$ denote  operators restricted to  the orthogonal complement of $\ket{g}$. The numerical range $W(H,V)$ is then a convex hull of the set union of $W(H^g,X^g)$ and $W(H^\perp, V^\perp)$.
\end{prop}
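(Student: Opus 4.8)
The plan is to turn the cusp hypothesis into a block decomposition of the two operators. First I would invoke Proposition~\ref{prop:cusp}: because $\ket{g}$ is the preimage of a cusp of $W(H,V)$, it is a simultaneous eigenvector, $H\ket{g}=\lambda_H\ket{g}$ and $V\ket{g}=\lambda_V\ket{g}$. Hence, with respect to the orthogonal splitting of the Hilbert space into $\linspan\{\ket{g}\}$ and its complement, both $H$ and $V$ are block diagonal: $H=H^g\oplus H^\perp$, $V=V^g\oplus V^\perp$, where $H^g,V^g$ are merely the scalars $\lambda_H,\lambda_V$, so that $W(H^g,V^g)$ is the single point $(\lambda_H,\lambda_V)$ --- the apex of the cusp. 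The proposition thus asserts that $W(H,V)$ is the convex hull of that apex together with $W(H^\perp,V^\perp)$, i.e. the "cone" drawn in case (i) of Fig.~\ref{fig:kinds}a.

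The inclusion $\supseteq$ is immediate: $(\lambda_H,\lambda_V)\in W(H,V)$ via $\rho=\ket{g}\bra{g}$, and any density operator $\sigma$ on $\ket{g}^\perp$ yields, when extended by zero to the whole space, the point $(\Tr\sigma H^\perp,\Tr\sigma V^\perp)\in W(H,V)$. Since $W(H,V)$ is a linear image of the convex set $\mathcal{M}_d$ it is convex, so it contains the entire convex hull of $W(H^g,V^g)\cup W(H^\perp,V^\perp)$.

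For the reverse inclusion I would take an arbitrary $\rho\in\mathcal{M}_d$, write $a=\bra{g}\rho\ket{g}\in[0,1]$ for its top-left (scalar) block and $\rho^\perp$ for the block supported on $\ket{g}^\perp$, which is positive (a diagonal block of a positive operator) with $\Tr\rho^\perp=1-a$. The one computation that matters is that block-diagonality of $H$ and $V$ kills the off-diagonal blocks of $\rho$, leaving $\Tr\rho H=a\lambda_H+\Tr(\rho^\perp H^\perp)$ and $\Tr\rho V=a\lambda_V+\Tr(\rho^\perp V^\perp)$. If $a=1$ this is the apex; otherwise $\rho^\perp/(1-a)$ is a bona fide state on $\ket{g}^\perp$, and the image of $\rho$ equals $a\,(\lambda_H,\lambda_V)+(1-a)\bigl(\Tr\tfrac{\rho^\perp}{1-a}H^\perp,\ \Tr\tfrac{\rho^\perp}{1-a}V^\perp\bigr)$, a convex combination of the apex and a point of $W(H^\perp,V^\perp)$. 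This shows $W(H,V)$ is contained in --- hence equal to --- the stated convex hull.

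The step I would watch most carefully is the vanishing of the off-diagonal contribution to the traces: this is exactly where the cusp enters (through Proposition~\ref{prop:cusp}), since it is the common-eigenvector property that makes $H$ and $V$ simultaneously block diagonal and thus blind to the coherences between $\ket{g}$ and its complement. In infinite dimension the algebraic identities are unchanged; one should only read "convex hull" as its closure if one works with the closed numerical range, and note that $W(H^\perp,V^\perp)$ may itself fail to be closed --- which is precisely what permits the degenerate cusp types (ii)--(iii) of Fig.~\ref{fig:kinds}a.
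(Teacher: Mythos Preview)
Your argument is correct and follows essentially the same route as the paper: invoke Proposition~\ref{prop:cusp} to make $\ket{g}$ a common eigenvector, split any state into its $\ket{g}$-component and orthogonal remainder, and observe that the cross terms drop out so the image is a convex combination of the apex and a point of $W(H^\perp,V^\perp)$. The only cosmetic differences are that the paper works with pure states $\ket\psi=\sqrt{p}\,e^{i\phi}\ket g+\sqrt{1-p}\,\ket{\psi^\perp}$ rather than general density matrices, and that you spell out the easy $\supseteq$ inclusion and the infinite-dimensional caveat explicitly while the paper leaves these implicit.
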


 \begin{proof}
Every state vector can be written as $\ket{\psi}=\sqrt{p} \exp(i\phi)  \ket{g} + \sqrt{1-p}\ket{\psi^\perp}$, where $\ket{\psi^\perp}$ is orthogonal to $\ket{g}$. Since the image of $\ket{g}$ is at the apex of a cusp, the vector is a simultaneous eigenstate of $H$ and $V$ (Proposition~\ref{prop:cusp}). This fact implies that the image of $\ket{\psi}$ in the numerical range lies at
\begin{equation}
\begin{aligned}
(\braket{H}_{\ket\psi},\braket{V}_{\ket\psi})=&p ~(\braket{H}_{\ket{g}},\braket{V}_{\ket{g}})+\\
&\left(1-p\right) (\braket{H}_{\ket{\psi^\perp}},\braket{V}_{\ket{\psi^\perp}}).
\end{aligned}
\end{equation}
The numerical range is thus a convex hull of the point corresponding to the state $\ket{g}$ and the numerical range  $W(H^\perp,V^\perp)$ of operators restricted to the orthogonal subspace:
\begin{equation}
\label{eq:conv}
W(H,V) = \text{conv} \left( W(H^g, V^g), W(H^\perp,V^\perp)\right).
\end{equation}
\end{proof}

The above result, illustrated in Fig. \ref{fig:kinds}, 
implies the following. 

\begin{prop} 
\label{prop:gapless}
If for a Hamiltonian $H$ there exists a Hermitian operator $V$ 
and $t_*>0$ such that 

\begin{enumerate}
\item the ground state $\ket{g(t)}$ of $H+t V$ is constant in~$t~\in~[0, t^*]$,
\item at the point of phase transition ($t=t^*$\hspace{-1mm}), $\braket{H}_{\ket{g(t)}}$ has a jump discontinuity of size
\begin{equation}
\varepsilon = \lim_{t\rightarrow t^*_+} \braket{H}_{\ket{g(t)}}- \lim_{t\rightarrow t^*_-} \braket{H}_{\ket{g(t)}},
\end{equation}  
\end{enumerate}
then the Hamiltonian $H$ has its energy gap $\Delta H$ bounded from above by $\varepsilon$:
\begin{equation}
0\le\Delta(H)\le\varepsilon.
\end{equation}
 If $\left<H\right>_{\ket{g(t)}}$ is continuous at $t=t^*$ for any Hermitian operator $V$ meeting the assumptions (equivalent to $\varepsilon=0$), the Hamiltonian $H$ is gapless.\\
 By contraposition, if a Hamiltonian $H$ is gapped, $\Delta H>0$, then the function $\left<H\right>_{\ket{g(t)}}$ is not continuous at $t=t^*$.
\end{prop}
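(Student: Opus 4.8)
The plan is to extract, from the hypotheses, a two–dimensional projected picture of the energy landscape near $t=t^*$ and then read the gap bound off the geometry of the numerical range $W(H,V)$, invoking Propositions~\ref{prop:cusp} and~\ref{prop:convhull}. First I would observe that assumption (1) together with Proposition~\ref{prop:groundboundary} (applied to the family of normal directions $(1,t)$) forces the common ground state $\ket{g}$ to sit at a cusp of $\partial W(H,V)$: as $t$ sweeps $[0,t^*]$, the support lines with normals proportional to $(1,t)$ all touch $W$ at the single point $\bigl(\braket{H}_{\ket g},\braket{V}_{\ket g}\bigr)$, so that point carries a nontrivial fan of normals. By Proposition~\ref{prop:cusp}, $\ket g$ is then a simultaneous eigenvector of $H$ and $V$; write $H\ket g=E_0\ket g$. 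The content of assumption (2) is that the \emph{other} ground-state branch, the one active for $t$ slightly above $t^*$, has $\braket{H}$ differing from $E_0$ by exactly $\varepsilon$ in the limit.

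Next I would use Proposition~\ref{prop:convhull} to decompose $W(H,V)$ as the convex hull of the apex point (the image of $\ket g$) and $W(H^\perp,V^\perp)$, the numerical range restricted to the orthogonal complement of $\ket g$. The branch of ground states that takes over at $t=t^*_+$ lives in that orthogonal complement; its image as $t\to t^*_+$ is a boundary point $\vec q$ of $W(H^\perp,V^\perp)$ whose first coordinate is $\lim_{t\to t^*_+}\braket{H}_{\ket{g(t)}}=E_0+\varepsilon$. The key estimate is that the lowest eigenvalue of $H^\perp$ is at most this first coordinate: for any state $\ket{\psi^\perp}$ in the orthogonal complement, $E_0^\perp := \lambda_{\min}(H^\perp)\le \braket{H}_{\ket{\psi^\perp}}$, and in particular $E_0^\perp \le E_0+\varepsilon$. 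On the other hand, the first excited energy of the full Hamiltonian $H$ satisfies $E_1(H)\le E_0^\perp$, since $H^\perp$ is $H$ compressed to a codimension-one subspace not containing the ground vector $\ket g$, so its spectrum interlaces and its bottom is an upper bound for $E_1(H)$ (here one must be a little careful about degeneracy of $E_0$, excluding the degenerate copies of the ground energy as the definition of $\Delta H$ instructs). Chaining the inequalities gives $E_1(H)-E_0(H)\le E_0^\perp - E_0 \le \varepsilon$, i.e. $0\le\Delta(H)\le\varepsilon$. Passing to the sequence $H_n$ and taking $\limsup$ yields the stated bound, and the gaplessness clause is the special case $\varepsilon=0$; the contrapositive is immediate.

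The main obstacle I anticipate is the passage to the infinite-dimensional / large-$n$ limit and the handling of ground-state degeneracy. In finite dimension the compression/interlacing argument is routine, but $\Delta H$ is defined as a $\limsup$ over a sequence of growing systems, so I would need the cusp, the eigenvector property, and the discontinuity $\varepsilon$ to hold (or hold in a controlled limiting sense) for each $H_n$, and then show the bound $E_1(H_n)-E_0(H_n)\le\varepsilon_n$ survives $\limsup_n$ — this is where one should be explicit that $V$ (and $t^*$, $\varepsilon$) may be taken $n$-independent, or else track $\varepsilon_n\to\varepsilon$. A secondary subtlety is justifying that the $t>t^*$ ground-state branch really is realized inside the orthogonal complement of $\ket g$ (rather than merely having vanishing overlap in a limit); assumption (1) giving exact constancy on $[0,t^*]$, plus continuity of the ground projector away from the level crossing, should close this, but it deserves a sentence. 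The rest — convexity, the variational characterization of $\lambda_{\min}$, and reading coordinates off $\partial W$ — is bookkeeping already licensed by Propositions~\ref{prop:groundboundary}–\ref{prop:convhull}.
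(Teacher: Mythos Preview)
Your proposal is correct and follows essentially the same route as the paper: identify the cusp from assumption~(1), invoke Proposition~\ref{prop:convhull} to split $W(H,V)$ into the apex and $W(H^\perp,V^\perp)$, observe that the $t>t^*$ branch of ground states lands in the orthogonal block, and then bound $\Delta(H)$ by the horizontal displacement $\varepsilon$ using $E_1(H)\le E_0^\perp\le E_0+\varepsilon$. The paper's proof is terser and asserts $E_0^\perp=E_1(H)$ directly (valid because $\ket g$ is an eigenvector, so $H=H^g\oplus H^\perp$ exactly) rather than your interlacing inequality, and it does not explicitly address the large-$n$ limit or the degeneracy caveat you flag; in that sense your account is more careful, but the underlying mechanism is identical.
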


\begin{proof}
{
The ground state energy of $H^\perp$  is equal to the energy of the
 first excited state of $H$. Thus,  for any state $|\chi\rangle$ 
  orthogonal to $\ket{g}$ 
 the expectation value of $\langle \chi|H|\chi\rangle$ 
 is not smaller than  $E_1(H)$. If $\Delta(H)>0$, this implies the existence 
 of flat segments joining $W(H^g, X^g)$ due to the convex structure of the numerical range (Eq. \eqref{eq:conv}).

This in turn implies the discontinuity of the expectation value  as the Hamiltonian 
parameter $t$ is varied. 
The relevant support points jump from $W(H^g, V^g)$ to $W(H^\perp,V^\perp)$. Since the first excited state minimizes $\langle H\rangle$ on $W(H^\perp,V^\perp)$, the difference of the expectation values of $H$ between the ground state and the new support point must be higher than $\Delta(H)$. 
}
\end{proof}

This method yields promising results in the case of families of Hamiltonians which correspond to increasing system size -- if the estimate $\varepsilon$ decreases to 0 with increasing system size, the system is gapless in the limit of infinite size. In the finite-dimensional case, clearly no spectral gap is possible (set aside the case of degenerated ground state); this is reflected in Theorem 2.1 of \cite{spitkovsky2018signatures} (see also   \cite{bebiano1986nondifferentiabie}), stating that cusps in numerical ranges of finite-dimensional systems are always of type (i) of Fig. \ref{fig:kinds}a.
 
\FloatBarrier
\subsection{Example: XY spin chain model}
\begin{figure}[!h]
\centering
{\includegraphics[width=.9\linewidth]{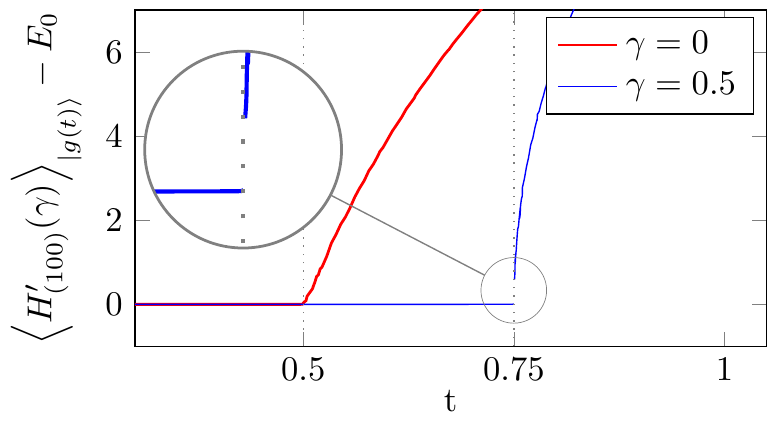}}
{\includegraphics[width=.9\linewidth]{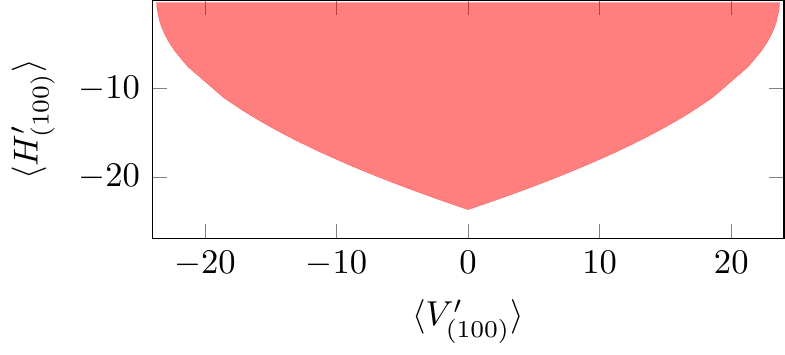}}
\caption{XY model. Top: Expectation value of $H'_{(N)}(\gamma)$ above its ground state energy 
$E_0$ over ground state of $H'_{(N)}(\gamma)~+~t V'_{(N)}$ as a function of $t$  for $H'_{(N)}(\gamma)$ and
 $V'_{(N)}$ defined in Eq. \eqref{eq:taperedhams} (modified versions of original Hamiltonians, see Appendix C). 
  Bounds for gap value for $H'_{(100)}(\gamma)$ follow from  Proposition \ref{prop:gapless}
 and the behavior of expectation value. Continuous (within numerical precision) dependence corresponds to an almost gapless Hamiltonian for $\gamma=0$,  while discontinuity observed for $\gamma=0.5$ allows us to estimate the size of the gap.\\
 Bottom: Joint numerical range
 $W(V'_{(100)},H'_{(100)})$ corresponding to $\gamma=0$.
  The cusp in $\partial W$  at  $\langle V'_{(100)} \rangle=0$, closely resembling kind  (iii) of Fig. \ref{fig:kinds}, suggests almost vanishing energy gap. Calculations for this figures were performed using matrix product states algorithm \cite{joselado2021dmrgpy}.}
\label{fig:xymod}
\end{figure}
Consider the following family of Hamiltonians describing a system of $N$ interacting $1/2$ spins, known as XY model:
\begin{equation}
\label{eq:totalham}
H_{(N)}(\gamma) = -\sum_{i=1}^{N-1} \left( \frac{1+\gamma}{2} \sigma_x^{(i)} \sigma_x^{(i+1)} + \frac{1-\gamma}{2} \sigma_y^{(i)} \sigma_y^{(i+1)}\right).
\end{equation}
In the thermodynamic limit of an infinite spin chain, $N\rightarrow\infty$,
 the Hamiltonian is gapless for $\gamma=0$. The gaplessness is detected by the following Hamiltonian with a three-spin interaction:
\begin{equation}
\label{eq:witnessham}
V_{(N)}=\sum_{i=1}^{N-2} \sigma^{(i-1)}_x \sigma^{(i)}_z \sigma^{(i+1)}_y-\sigma^{(i-1)}_y \sigma^{(i)}_z \sigma^{(i+1)}_x.
\end{equation} 
For $\gamma=0$ the gaplessness of $H$ is detected in the following way: 
to estimate the value of the energy gap of $H_{(N)}$ we apply Proposition \ref{prop:gapless}. It turns out that the ground state of $H_{(N)}$ is an eigenstate of $V_{(N)}$. The numerically determined expectation value $\braket{H}_{\ket{t}}$ as a function of $t$ has a discontinuity, the size of which approaches 0 as $N$ is increased~--~see Fig. \ref{fig:xymod}. This numerical observation is confirmed by the asymptotic analytical treatment presented in Appendix B. Since the size of the gap approaches 0 as $N\rightarrow\infty$, the Hamiltonian $H$ is asymptotically gapless for $\gamma=0$. 

For $\gamma\neq0$, the Hamiltonian $H(\gamma)$ has an energy gap equal to $\gamma$; it is reflected by the growing discontinuity in the expectation value of $H(\gamma)$ with respect to the ground state of $H(\gamma)+t V$.

\section{Concluding remarks}

Estimation of the value of the energy gap is important in various applications, such as
many body physics,  condensed matter physics,  theory of quantum information, 
and adiabatic quantum computing.
In this paper, we provide a method for estimation of the value of the energy gap, which may see practical applications. The technique developed combines geometrical and algebraic properties of the numerical range -- the set of simultaneously attainable expectation values. It allows us to obtain an upper bound for the value of the energy gap with knowledge of the properties of the ground states only. 

Note that the technique proposed here is universal, as it can be applied in a very general setting. While \emph{gaplessness witnesses} -- operators used to detect gaplessness in a geometric way -- exist for every gapless Hamiltonian, generically 'trivial' witnesses can be difficult to implement physically 
as they might be nonlocal. 
Numerical analysis provides evidence that if for a given Hamiltonian $H$, the ground state energy is $E_0=0$, the gaplessness of $H$ is routinely detected by a random observable of the form $H Z H$, where $Z$ is a random Hermitian 
 matrix drawn from the Gaussian Orthogonal Ensemble.

\section{Acknowledgments}
\label{acknowledgment}

It is a pleasure to thank  Marius Lemm, Ramis Movassagh, Piotr Ro{\.z}ek, Stephan Weis, and Jakub Zakrzewski
 for helpful remarks and suggestions. 
Financial support by Narodowe Centrum Nauki under the grant
number DEC-2015/18/A/ST2/00274 and by the Fundacja na rzecz Nauki Polskiej
under the project Team-Net NTQC is gratefully acknowledged.
 
\nocite{spitkovsky2018signatures,xie2019observing,schwonnek2017state,szymanski2019geometric,caston2001eigenvalues,chen2017joint,chen2017physical,sehrawat2019uncertainty,czartowski2019separability}

~
\appendix 
\section{Appendix A: Existence of witnesses and trivial gaplessness}

Provided $H$ is gapless, there always exists at least a single witness. If the ground state energy of $H$ is zero, $H^2 - H$ is capable to show that the gap $\Delta(H)$ vanishes. Furthermore, for $H$ such that $E_0=0$, the 
numerical simulation suggests that a random Hermitian observable $Z$
with probability one yields an effective witness of a vanishing energy gap, $V=HZH$.

The case associated with the gaplessness of $H$ by Proposition \ref{prop:gapless} often appears in various contexts: in one-dimensional Bose-Hubbard model with $N$ sites, the average occupation $\left< n \right> / N$ is constant in the Mott insulator phase and it varies continuously as the hopping rate increases. Similar behavior is observed with the average spin $\left< S_z \right> / N$ in the  XY model interacting with an external field. In these cases, the apparent gaplessness is trivial, as it pertains to operators averaged over the entire system. While the total spin operator $S_z = \sum_{i=1}^N \sigma_z^{i}$ is not gapless, the average spin $S_z / N = \sum_{i=1}^N \sigma_z^{i} /N $ is: it has homogeneously distributed eigenvalues ranging from $-1/2$ to $1/2$ with a spacing of $1/N$. 

\section{Appendix B: Analytical treatment of the XY model witness}
\newcommand{\Sx}[1]{\sigma_{x}^{(#1)}}
\newcommand{\Sy}[1]{\sigma_{y}^{(#1)}}
\newcommand{\Sz}[1]{\sigma_{z}^{(#1)}}
\newcommand{\cc}[1]{c_{#1}}
\newcommand{\cd}[1]{c^\dag_{#1}}
\newcommand{\Sc}[1]{\prod_{m<#1} (1-2\cd{m}\cc{m})}
{We wish to determine the ground state energy and excitation spectrum of the Hamiltonian $G_{(N)}$ acting on a finite chain of $N$ sites, parameterized by $t$ and $\gamma$:
\begin{equation}\begin{aligned}
G_{(N)}(\gamma,t)=&\overbrace{-\sum_{i=1}^{N-1} \left( \frac{1+\gamma}{2} \sigma_x^{(i)} \sigma_x^{(i+1)} + \frac{1-\gamma}{2} \sigma_y^{(i)} \sigma_y^{(i+1)}\right)}^{H_{(N)}(\gamma)}+\\
&t \underbrace{\sum_{i=2}^{N-2} \sigma^{(i-1)}_x \sigma^{(i)}_z \sigma^{(i+1)}_y-\sigma^{(i-1)}_y \sigma^{(i)}_z \sigma^{(i+1)}_x}_{V_{(N)}}.
\label{eq:gofn}
\end{aligned}\end{equation}
The free fermion method \cite{batista2001generalized} works remarkably well in this case. To proceed, let us express the spin operators in the following way:
} 
\begin{align}
\Sx{n}&=(\cc{n}+\cd{n})\Sc{n},\\
\Sy{n}&=i(\cd{n}-\cc{n})\Sc{n},\\
\Sz{n}&=(1-2\cd{n}\cc{n}),
\end{align}
where $\cc{},\cd{}$ obey the fermionic anticommutation relations:
\begin{align}
\{\cc{i},\cd{j}\}&=\delta_{ij},~
\{\cc{i},\cc{j}\}=0,~
\{\cd{i},\cd{j}\}=0.
\end{align}
{This set allows us to write the term representing the XY  part of the Hamiltonian as
\begin{equation}
\label{eq:hpart}
\begin{aligned}
H_{(N)}=&-\sum_{i=1}^{N-1} \left(\frac{1+\gamma}{2} \Sx{i}\Sx{i+1}  +  \frac{1-\gamma}{2} \Sy{i}\Sy{i+1} \right)\\
=&-\sum_{i=1}^{N-1} \left(\cc{i}\cd{i+1}+\cd{i}\cc{i+1}\right) + \gamma  \left(\cc{i}\cc{i+1}+\cd{i}\cd{i+1}\right) + \text{b.t}. 
\end{aligned}
\end{equation} 
The boundary term (b.t.), appearing here due to the broken translational symmetry of a finite chain has the form of a long chain of fermion operators. Its contribution to the behavior of the complete Hamiltonian is, however, relatively constant (of order $O(1)$ in all expectation values, while the rest of $H_{(N)}$ grows as $O(N)$) and independent of $N$ -- since we are interested in the asymptotic behavior as $N\rightarrow\infty$, we omit it in later calculations.
The same substitution also simplifies the form of the witness part of the Hamiltonian:
\begin{equation}\begin{aligned}
V_{(N)}=&\sum_{i=2}^{N-2} \sigma^{(i-1)}_x \sigma^{(i)}_z \sigma^{(i+1)}_y-\sigma^{(i-1)}_y \sigma^{(i)}_z \sigma^{(i+1)}_x\\
= &-2 i \sum_{i=2}^{N-2} \cd{i-1}\cd{i+1} + \cc{i-1}\cc{i+1}.
\end{aligned}\end{equation}}
The total Hamiltonian, $G(\gamma,t)$ has thus the form of nearest- and next-nearest neighbor hopping and creation/annihilation of free fermions. In general, such Hamiltonians are diagonalized by expressing the creation and annihilation operators as a sum of creation and annihilation of plane waves, followed by Bogoliubov transformation \cite{altland2010condensed}:

\begin{equation}
\begin{aligned}
c_n^\dagger &= \frac{1}{\sqrt{N}} \sum_{n} \exp (i n k) d_k^\dagger,\\
c_n &= \frac{1}{\sqrt{N}} \sum_{n} \exp (-i n k) d_k.
\end{aligned}
\end{equation}

The resulting Hamiltonian expressed in $d, d^\dagger$ operators reads
\begin{equation}\begin{aligned}
G_{(N)}(\gamma, t) = &\sum_k \cos (k) d_k^\dagger d_k  \\
&+\gamma(\exp(-ik) d_k^\dagger d_{-k}^\dagger + \exp(ik) d_k d_{-k} )\\
&+ t \sin(2k) d_k^\dagger d_k ,
\end{aligned}\end{equation}
where the momentum index $k$ spans $\pm \frac{\pi}{N}, \pm \frac{3\pi}{N}, \ldots, \pm \frac{(N-1) \pi}{N}$. This form is finally diagonalized by a standard Bogoliubov transformation to
\begin{equation}
G(\gamma, t) = \sum_k d_k^\dagger d_k \left( \underbrace{\sqrt{(\cos k + t \sin 2k)^2 + \gamma^2 \sin^2 k}}_{E_k} -\frac12\right).
\end{equation}

The energies $E_k$ pertain to excitations, while the negative sum of them gives the ground state energy:
\begin{equation}
E_0(\gamma, t) = - \sum_k E_k.
\end{equation}
 This result can be applied in the calculation of the joint numerical range, since the knowledge of the ground state energy as a function of parameters is sufficient to determine the boundary points using the Hellmann-Feynmann theorem: if $E_0(\gamma, t)$ is known, then 
\begin{equation-aligned}
\langle V_{(N)} \rangle_{\gamma, t} &= \frac{\partial E_0}{\partial t} ,\\
\langle H_{(N)} \rangle_{\gamma, t} &= E_0(\gamma, t) - t \frac{\partial E_0}{\partial t}.
\end{equation-aligned}
Calculations done this way are necessarily approximate, since in Eq. \eqref{eq:hpart} the boundary terms were omitted. The error of approximation, however, is of order $O(1)$ (since the boundary term consists of a fixed number of operators with bounded norm), while the expectation values grow as $O(n)$. 

Calculations performed this way agree well with numerical investigations performed using the density matrix renormalization group for large $N$ \cite{joselado2021dmrgpy}. This method can be thus used to explore  the joint numerical range associated with XY model (Fig. \ref{fig:xymod}) and its properties, including the geometric determination of the energy gap value (Proposition \ref{prop:gapless}).
\section{Appendix C: Tapered Hamiltonians}
\begin{figure}[!h]
\centering
{\includegraphics[width=.9\linewidth]{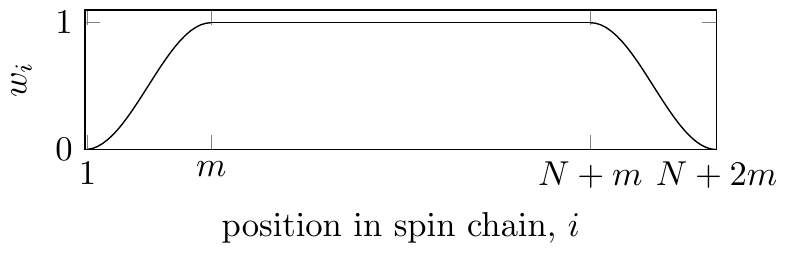}}
\caption{Schematic plot of tapering weight, Eq. \eqref{eq:taper}.}
\label{fig:taper}
\end{figure}
Hamiltonian presented in Eq. \eqref{eq:gofn} needs a slight modification for calculations presented in Fig. \ref{fig:xymod}. The original form is suscepticle to the finite size effects centered at the ends of spin chain. One of the methods to deal with this is to consider cyclic Hamiltonians, in which $\sigma_\mu^{(N+k)}=\sigma_\mu^{(k)}$ -- this brings back translational invariance of eigenstates, but results in suboptimal (yet technically correct and consistent with Proposition 4) plots. We have opted for an alternative: tapering the weights of interactions toward the ends of spin chain. 

The tapered Hamiltonians have very similar form to the originals:
\begin{equation}\begin{aligned}
H'_{(N,m)}(\gamma) &= -\sum_{i=1}^{N+2m-1} \left( \frac{1+\gamma}{2} s_x^{(i)} s_x^{(i+1)} + \frac{1-\gamma}{2} s_y^{(i)} s_y^{(i+1)}\right),\\
V'_{(N,m)}&=\sum_{i=1}^{N+2m-2} S^{(i-1)}_x S^{(i)}_z S^{(i+1)}_y-S^{(i-1)}_y S^{(i)}_z S^{(i+1)}_x.
\label{eq:taperedhams}
\end{aligned}
\end{equation} 

The spin operators $\sigma^{(i)}_\mu$ have been substituted with $s^{(i)}_\mu$ and $S^{(i)}_\mu$. These operators are defined as
\begin{equation}\begin{aligned}
s^{(i)}_\mu &= w_i^{1/2} \sigma^{(i)}_\mu,\\
S^{(i)}_\mu &=w_i^{1/3} \sigma^{(i)}_\mu,
\end{aligned}
\end{equation} 
where $w_i$ is the tapering weight, equal to $1$ in the bulk and approaching $0$ near the spin chain boundary. It appears in different powers in $s^{(i)}_\mu$ and $S^{(i)}_\mu$ to match the spin powers in $H'$ and $V'$ -- resulting  two- and three-site Hamiltonian terms are weighted approximately according to $w_i$.

The weight $w_i$ used in calculations is a sigmoid one, symmetric on both ends (see Fig. \ref{fig:taper}):
\begin{equation}
w_i = \begin{cases} 
      \frac12 \left(1-\cos\frac{\pi (i-1)}{m}\right) & i\leq m \\
      1 & m<i\leq N-m \\
      \frac12 \left(1-\cos\frac{\pi (N+2m-i)}{m}\right) & i>N-m
   \end{cases}.
   \label{eq:taper}
\end{equation}
For plots presented in Fig. \ref{fig:xymod}, tapering distance $m=50$ was chosen. Hence, in addition to $N=100$ spin sites with full weight $w_i=1$ the system was surrounded with $m=50$ sites with decreasing weight on both sides.
\bibliographystyle{own}
 
\bibliography{refs5}~
\end{document}